\newcommand{\bilderpfad}{./}
\newtheorem{lemma}{Lemma}[section]
\newtheorem{theorem}[lemma]{Theorem}
\newtheorem{proposition}[lemma]{Proposition}
\renewcommand{\qed}{\vrule height .9ex width .8ex depth -.1ex}
\newcommand{\smallqed}{\quad
   \hbox{\vrule\vbox{\hrule\hbox to 4pt{%
         \vbox to 4pt{\hsize=4pt}}\hrule}\vrule}}
\renewenvironment{proof}{\noindent {\bf Proof.}}{\qed\medskip}
\newcommand{\df}{_{\rm def}}
\newcommand{\lcwd}{\mbox{\rm lcwd}}
\newcommand{\cwd}{\mbox{\rm cwd}}
\newcommand{\val}{{\rm val}}
\begin{document}

\title{Minimal forbidden induced subgraphs of graphs of
   bounded clique-width and bounded linear clique-width}

\author{
Daniel Meister\thanks{Theoretical Computer Science,
   University of Trier, Germany.
   Email: {\tt daniel.meister@uni-trier.de}}
\and
Udi Rotics\thanks{Netanya Academic College, Netanya, Israel.
   Email: {\tt rotics@netanya.ac.il}}}

\date{}

\maketitle

\begin{abstract}

In the study of full bubble model graphs of bounded clique-width
and bounded linear clique-width, we determined complete sets
of forbidden induced subgraphs, that are minimal in the class
of full bubble model graphs. In this note, we show that (almost
all of) these graphs are minimal in the class of all graphs. As
a corollary, we can give sets of minimal forbidden induced
subgraphs for graphs of bounded clique-width and for graphs of
bounded linear clique-width for arbitrary bounds.

\end{abstract}


\begin{section}{Preparation}
\label{section:introduction}

We consider graphs that are obtained from path powers. Path
powers are the powers of induced paths. They are proper
interval graphs. Let $k$ be an integer with $k\ge 1$, and let
$\Lambda= \langle x_1, \ldots, x_n\rangle$ be an ordering of
$n$ vertices. The {\it $k$-path power} with $k$-path
layout~$\Lambda$ is the graph on vertex set~$\{x_1, \ldots, x_n\}$,
and for $1\le i< j\le n$, $x_i$ and $x_j$ are adjacent if and
only if $j- i\le k$. It is an immediate consequence that the
1-path powers are exactly the induced paths. In this note, we
consider graphs that are obtained from combining path powers
into more complex graphs. We do not define and introduce the
used and necessary terminology. Instead, we refer to our main
paper, \cite{meisterRotics2013}.

We want to show upper bounds on the clique-width and linear
clique-width of some special graphs. Generally, upper bounds
can be shown explicitly, by constructing appropriate
clique-width expressions, or implicitly, by embedding as an
induced subgraph a graph into another graph of known bounded
clique-width or linear clique-width. We mainly apply the latter
approach. For two graphs~$G$ and $H$, we say that $H$ is
{\it embeddable into} $G$ if $H$ is isomorphic to an induced
subgraph of $G$. We consider only proper interval graphs here,
that can be represented by bubble models
\cite{heggernesMeisterPapadopoulos2009}, and embedding a proper
interval graph into a proper interval graph can be understood
as embedding a bubble model representation of the one graph
into a bubble model representation of the other graph. For
convenience, we may not distinguish between the graph itself
and a bubble model representation of the graph.

We will often embed into the following graph. Let $k$ be an
integer with $k\ge 2$. The graph~$J_k$ is a $k$-path power
on $(2k-1)(k+1)+1$ vertices and with $k$-path
layout~$\langle z_1, \ldots, z_m\rangle$. Let
$g=\df (k-1)(k+1)= k^2-1$. We will consider $J_k{-}z_g$. Note
here that $g$ depends on $k$, so that $z_g= z_{g_k}$ is a more
appropriate notation. For readability, we nevertheless write
$z_g$ instead of $z_{g_k}$. Examples of $J_k{-}z_g$ for three
values of $k$ are depicted in Figure~\ref{bild:examplesofJgraph},
where the graphs are represented by bubble models.

\begin{figure}[t]
\centering
\includegraphics[width=14cm]{\bilderpfad 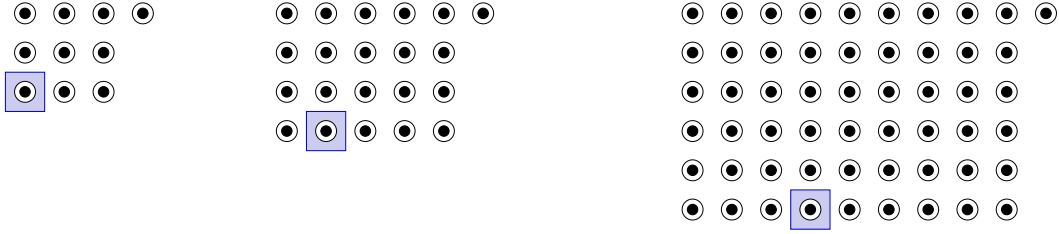}
\caption{Depicted are bubble model representations for $J_2$,
$J_3$ and $J_5$, seen from left to right. The highlighted vertex
is $z_g$.}
\label{bild:examplesofJgraph}
\end{figure}

\begin{lemma}
\label{lemma:upperboundforinducedsubgraphofJk}
For $k\ge 2$, $\lcwd(J_k{-}z_g)\le k+1$.
\end{lemma}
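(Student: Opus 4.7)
The plan is to construct an explicit linear clique-width expression for $J_k - z_g$ using $k+1$ labels. I would process the vertices in the natural order of the $k$-path layout but skipping $z_g$, that is, in the order $z_1, z_2, \ldots, z_{g-1}, z_{g+1}, \ldots, z_m$. Throughout the construction I reserve $k$ labels for \emph{active} vertices (those still able to receive further edges) and one label $k+1$ as a \emph{dead bin} into which vertices are renamed once they can no longer receive new edges. The invariant I intend to maintain is that after the introduction of $z_i$, the active vertices $\{z_j : i - k < j \le i,\ j \ne g\}$ carry pairwise distinct labels from $\{1,\ldots,k\}$ and all other processed vertices have been renamed to $k+1$.

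A generic step when introducing $z_i$ consists of three sub-operations: first, rename the oldest active label (carried by the vertex that becomes inactive at this step) to $k+1$, which both merges this vertex into the dead bin and frees a label; second, introduce $z_i$ with the label just freed; third, add an edge between $z_i$'s label and each of the other active labels, realising $z_i$'s left-neighbourhood in $J_k - z_g$. Before the gap this is exactly the standard cyclic scheme for an ordinary $k$-path power, and it never exceeds $k+1$ labels.

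The crucial point is the transition at $i = g+1$: here $z_{g+1}$ has only $k-1$ left-neighbours in $J_k - z_g$, namely $z_{g-k+1},\ldots,z_{g-1}$, because $z_g$ is absent. The active set of size $k$ at the end of the pre-gap phase can therefore be reduced by one and $z_{g+1}$ inserted without having to introduce a $(k+2)$-nd label; the ``missing'' neighbour provides exactly the slack required to keep the total label count at $k+1$ across the transition. After this step the cyclic scheme resumes on $z_{g+2},\ldots,z_m$, with a relabelled active set of size $k-1$ that grows back to size $k$ as we move away from the gap.

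The main obstacle is the bookkeeping around the gap: one must verify that every vertex receives all of its edges before being renamed into the dead bin, that no vertex is merged prematurely with a distinguishable live vertex, and that the freed label can be threaded consistently through the gap transition so that no rename operation changes an edge already correctly present or absent. The cleanest way to verify this is a case analysis on the position of $i$ relative to $g$, splitting into the sub-ranges $1 \le i \le g-k$, $g-k < i < g$, $g < i \le g+k-1$, and $g+k \le i \le m$, and tracking the label assignment explicitly in each range.
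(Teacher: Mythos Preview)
Your generic step is flawed in the pre-gap region. Take any $i$ with $k+2\le i\le g-1$; this range is non-empty for $k\ge 3$, since $g-1=k^2-2\ge k+2$ there. In $J_k$, and hence in $J_k{-}z_g$, the vertex $z_i$ is adjacent to $z_{i-k}$. But your scheme renames $z_{i-k}$ into the dead bin \emph{before} introducing $z_i$, and only afterwards adds edges from $z_i$'s label to the remaining active labels. The edge $z_iz_{i-k}$ is therefore never created, and it cannot be recovered later, since $z_{i-k}$ now shares label~$k+1$ with earlier dead vertices that are non-adjacent to $z_i$.

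This is not a bookkeeping slip that a reordering of the three sub-operations would fix. Processing the pre-gap segment in layout order genuinely forces $k+2$ labels once $i\ge k+2$: at that moment the vertices $z_{i-k},\ldots,z_{i-1}$ must carry pairwise distinct labels (their future neighbourhoods within the long pre-gap segment all differ), the already dead vertices $z_1,\ldots,z_{i-k-1}$ need a further distinct label (they are non-adjacent to $z_i$), and $z_i$ itself needs yet another (sharing the dead label would merge it with vertices that will not receive its future edges). That is $k+2$ labels in total. The gap at $z_g$ yields one label of slack only locally; it does not help at positions far to its left, and your claim that ``before the gap this is exactly the standard cyclic scheme \ldots\ and it never exceeds $k+1$ labels'' is exactly where the argument breaks.

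The paper's proof avoids this by not processing in layout order at all. It simply observes that $J_k{-}z_g$ admits an \emph{open $k$-model} in the sense of \cite{meisterRotics2013}, and invokes the linear $(k+1)$-expression constructed there for such models. In that construction the slack provided by the missing vertex $z_g$ is exploited through the global column structure of the bubble model rather than only near the gap in the linear layout.
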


\begin{proof}
It suffices to observe that $J_k{-}z_g$ has an open $k$-model
as defined in \cite{meisterRotics2013}.
\end{proof}

\end{section}


\begin{section}{Induced subgraphs of $Z_k$}

Let $k$ be an integer with $k\ge 0$, and let
$n=\df k(k+1)+2$. The graph~$Z_k$ is a $k$-path power on
$n$ vertices and with $k$-path
layout~$\langle v_1, \ldots, v_n\rangle$.

\begin{theorem}[\cite{heggernesMeisterPapadopoulosRotics201X}]
For every $k\ge 0$, $\cwd(Z_k)\ge k+2$.
\end{theorem}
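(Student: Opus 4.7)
The natural approach is via the standard balanced-subtree lower-bound technique for clique-width. Suppose for contradiction that $\cwd(Z_k) \leq k+1$, and fix a $(k{+}1)$-expression with expression tree~$T$. The governing principle is: at any node~$u$ of~$T$ whose subtree produces a labelled graph on vertex set $W \subseteq V(Z_k)$, vertices of~$W$ sharing a label must have identical neighbourhoods in $C =\df V(Z_k) \setminus W$ inside the final graph~$Z_k$. This is because relabellings act uniformly on a whole label class, so every subsequent join operation adds edges either to all or to none of its members. Consequently, the collection of distinct ``outside neighbourhoods'' $\{N_{Z_k}(w) \cap C : w \in W\}$ has size at most $k+1$.

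Next, I would apply the standard balancing lemma for binary trees to~$T$: some node~$u$ produces a vertex set~$W$ whose size is in a chosen moderate window (for instance, between $\lceil n/3 \rceil$ and $\lceil 2n/3 \rceil$, where $n = k(k{+}1)+2$). Both $W$ and $C$ then contain a controlled number of vertices relative to the $k$-path layout $\langle v_1, \ldots, v_n \rangle$ of~$Z_k$.

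The crux is to show that every such~$W$ must exhibit at least $k+2$ pairwise distinct outside neighbourhoods, contradicting the bound from the first paragraph. Because $Z_k$ is a $k$-path power, the outside neighbourhood of $v_i \in W$ is determined by the pattern of $C$ inside the length-$2k$ window $\{v_{i-k}, \ldots, v_{i-1}, v_{i+1}, \ldots, v_{i+k}\}$ around~$v_i$. The main obstacle, and the reason for the specific value $n = k(k{+}1)+2$, is a combinatorial argument that for any interleaving of~$W$ and~$C$ along the layout, the number of distinct such windows exceeds $k+1$. The intuition is to identify $k+2$ indices $i_0 < i_1 < \cdots < i_{k+1}$ inside~$W$ such that passing from $v_{i_j}$ to $v_{i_{j+1}}$ strictly changes the window---either because a new element of~$C$ enters on the right or because one leaves on the left---and to argue that the count $k(k{+}1)+2$ is tight for guaranteeing enough such transitions regardless of where~$T$ chooses its balanced subtree. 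Verifying this tightness, and in particular ruling out pathological choices of~$W$ in which the outside patterns collapse into few classes, is the technical heart of the proof.
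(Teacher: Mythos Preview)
The paper does not prove this theorem; it simply quotes the bound from \cite{heggernesMeisterPapadopoulosRotics201X} and uses it as a black box. So there is no proof in the paper to compare against, and any self-contained argument you supply is automatically ``different'' from what the paper does.

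As for your plan itself: the framework (balanced node in the expression tree, then count distinct outside-neighbourhoods across the resulting cut) is exactly the standard tool for clique-width lower bounds, and it is indeed the technique behind the cited result. However, your write-up stops precisely at the point where the real work begins. You correctly identify that one must exhibit, for \emph{every} balanced bipartition $(W,C)$ of the $k$-path layout, at least $k{+}2$ vertices of $W$ with pairwise distinct neighbourhoods in $C$, and you note that the constant $n=k(k{+}1)+2$ is tuned for this. But you do not actually prove it, and the hand-wave about ``a new element of $C$ enters on the right or one leaves on the left'' is not yet an argument: a single transition can simultaneously gain and lose elements of $C$, and two indices in $W$ at distance more than $k$ can still have identical $C$-neighbourhoods (for instance, both empty). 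Establishing the $k{+}2$ count uniformly over all balanced $W$ requires a careful case analysis or a counting argument tied to the exact length $k(k{+}1)+2$; as written, that core lemma is asserted rather than shown. Until that step is filled in, the proposal is an outline, not a proof.
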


We show that $Z_k$ is a minimal graph of clique-width at least
$k+2$ and of linear clique-width at least $k+2$. Since
$\lcwd(G)\ge \cwd(G)$ for every graph~$G$, it suffices to
consider linear clique-width.

\begin{lemma}
\label{lemma:embeddingZkintoJk}
For every $k\ge 3$, every proper induced subgraph of $Z_k$
is an induced subgraph of $J_k{-}z_g$.
\end{lemma}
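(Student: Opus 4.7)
The plan is first to reduce the statement to single-vertex deletions. Every proper induced subgraph of $Z_k$ is obtained by deleting at least one vertex, hence is an induced subgraph of some $Z_k - v_i$, so it suffices to embed each $Z_k - v_i$ into $J_k - z_g$.

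For each individual embedding the natural candidate is a translation of layout-indices: for each $i \in \{1,\ldots,n\}$, define $\phi_i(v_j) = z_{j - i + g}$ for every $j \in \{1,\ldots,n\} \setminus \{i\}$. Since $j \ne i$, the image is always distinct from $z_g$, so $\phi_i$ maps into $V(J_k - z_g)$. Because both $Z_k$ and $J_k$ are $k$-path powers and adjacency in either is determined only by the absolute difference of layout-indices, a pure translation preserves and reflects adjacency; it is also clearly injective. Hence $\phi_i$ is an induced-subgraph embedding as soon as its image lies in $\{z_1,\ldots,z_m\}$.

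That range condition, $\phi_i(v_1) \ge 1$ and $\phi_i(v_n) \le m$, boils down to $1 \le i \le g = k^2 - 1$, which settles $i \in \{1,\ldots,k^2-1\}$. For the remaining values $k^2 \le i \le n$ I would invoke the reversal automorphism $v_j \mapsto v_{n+1-j}$ of the $k$-path power $Z_k$: this yields $Z_k - v_i \cong Z_k - v_{n+1-i}$, and the partner index $i' = n+1-i$ lies in $\{1,\ldots,k+3\}$, so provided $k+3 \le k^2 - 1$ the direct embedding $\phi_{i'}$ handles $Z_k - v_{i'}$.

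The only place where the hypothesis gets used, and the only obstacle, is verifying that the direct range $\{1,\ldots,k^2-1\}$ and the reflected range $\{k+4,\ldots,n\}$ jointly cover $\{1,\ldots,n\}$. This amounts to the inequality $k + 4 \le k^2$, equivalently $k(k-1) \ge 4$, which holds precisely when $k \ge 3$ and fails for $k = 2$. This seems to be exactly why the lemma is stated only for $k \ge 3$.
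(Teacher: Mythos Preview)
Your argument is correct and essentially identical to the paper's: both use the index-translation $v_j \mapsto z_{j-i+g}$ sending $v_i$ to $z_g$, combined with the reversal automorphism of the $k$-path power to cover the remaining indices, and both reduce the hypothesis $k\ge 3$ to the same inequality $k^2-k\ge 4$. The only cosmetic difference is that the paper applies the automorphism first (restricting to $t\le \frac{k(k+1)}{2}+1$) and then checks the range, whereas you first determine the directly embeddable range $i\le g$ and then invoke the automorphism for the rest.
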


\begin{proof}
It suffices to show that $Z_k{-}v_t$ for every $1\le t\le k(k+1)+2$
is an induced subgraph of $J_k{-}z_g$. By an automorphism
argument, it suffices to restrict to $t$ with
$1\le t\le \frac{k(k+1)}{2}+1$. We show that $Z_k$ can be
embedded into $J_k$ such that $v_t$ is mapped to $z_g$.
Observe that $(k-1)(k+1)\ge \frac{k(k+1)}{2}+1$, so that
$t\le g$.

We define a mapping~$\varphi$ from $V(Z_k)$ into $V(J_k)$: let
$\varphi(v_i)=\df z_{g-t+i}$ for every $1\le i\le n$. We verify that
$\varphi$ has the desired properties. For every $1\le i\le n$,
\begin{eqnarray*}
1\ \le\ g-t+1\ \le\ g-t+i
&\le& g-t+n\\
&=& g-t+ k(k+1)+2\\
&\le& g-1+ k(k+1)+2\\
&=& (k-1)(k+1)+ k(k+1)+ 1\ =\ (2k-1)(k+1)+1\thinspace.
\end{eqnarray*}
Thus, $\varphi$ is a well-defined mapping from $V(Z_k)$ into
$V(J_k)$. It remains to see that $\varphi(v_t)= z_g$ is indeed
the case: $\varphi(v_t)= z_{g-t+t}= z_g$. Therefore, $\varphi$
defines an embedding of $Z_k{-}v_t$ into $J_k{-}z_g$ of the
desired form, and we conclude the claim of the lemma.
\end{proof}

\begin{proposition}
For every $k\ge 0$, every proper induced subgraph of $Z_k$
has linear clique-width at most $k+1$.
\end{proposition}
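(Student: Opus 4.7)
The plan is to combine the two lemmas already proved, and then handle the boundary values of $k$ that are not covered by Lemma~\ref{lemma:embeddingZkintoJk} separately.

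For the main range $k\ge 3$, the argument is essentially a one-liner. Let $H$ be any proper induced subgraph of $Z_k$. By Lemma~\ref{lemma:embeddingZkintoJk}, $H$ is isomorphic to an induced subgraph of $J_k{-}z_g$. Since linear clique-width is monotone under taking induced subgraphs, and since Lemma~\ref{lemma:upperboundforinducedsubgraphofJk} gives $\lcwd(J_k{-}z_g)\le k+1$, we conclude $\lcwd(H)\le k+1$.

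The remaining cases $k\in\{0,1,2\}$ have to be checked by hand, and this is the only real obstacle. For $k=0$, the graph $Z_0$ is a 0-path power on $2$ vertices, i.e.\ two isolated vertices, so a proper induced subgraph has at most one vertex and linear clique-width at most $1=k+1$. For $k=1$, the graph $Z_1$ is the induced path on $4$ vertices, so every proper induced subgraph is an induced subgraph of $P_3$ and has linear clique-width at most $2=k+1$. For $k=2$, the graph $Z_2$ is the $2$-path power on $8$ vertices; here we can still invoke Lemma~\ref{lemma:upperboundforinducedsubgraphofJk} (which holds for $k\ge 2$) by exhibiting, for each $1\le t\le 8$, an embedding of $Z_2{-}v_t$ into $J_2{-}z_g$ directly. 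By an automorphism argument it is enough to do this for $t\in\{1,2,3,4\}$, and the embeddings are obtained by the same shift $v_i\mapsto z_{g-t+i}$ used in the proof of Lemma~\ref{lemma:embeddingZkintoJk}; one only has to verify that the index ranges fit, which is a short calculation using $g=3$ and $n=8$ together with the fact that $J_2$ has $(2\cdot 2 -1)(2+1)+1=10$ vertices.

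The expected main obstacle is thus the $k=2$ case, because Lemma~\ref{lemma:embeddingZkintoJk} is stated only for $k\ge 3$. The resolution is the observation that the inequality $t\le g$ needed in that proof still holds when $k=2$ and $t\le \lceil n/2\rceil = 4$, since $g=(k-1)(k+1)=3$ covers $t\in\{1,2,3\}$ and the remaining value $t=4$ can be handled by the symmetric embedding $v_i\mapsto z_{g+1-t+i}$ (mapping $v_t$ to $z_{g+1}$, which is equally absent once we realise that $J_k{-}z_g$ contains $J_k{-}z_{g+1}$ as an induced subgraph after relabelling). In other words, only a handful of explicit embeddings and a short case check are needed beyond what the two preceding lemmas already give.
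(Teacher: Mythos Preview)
Your argument for $k\ge 3$ and for $k\in\{0,1\}$ is exactly the paper's argument, and for $k=2$ with $t\in\{1,2,3\}$ you correctly observe that the shift $v_i\mapsto z_{g-t+i}$ from Lemma~\ref{lemma:embeddingZkintoJk} still works because $t\le g=3$. The paper does the same.

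The genuine gap is the case $k=2$, $t=4$. Your first description (``the embeddings are obtained by the same shift $v_i\mapsto z_{g-t+i}$ \ldots\ for $t\in\{1,2,3,4\}$'') already fails here, since for $t=4$ and $g=3$ this would send $v_1$ to the nonexistent $z_0$. Your fallback, the ``symmetric embedding'' $v_i\mapsto z_{g+1-t+i}=z_i$ sending $v_4$ to $z_{g+1}=z_4$, does embed $Z_2{-}v_4$ into $J_2{-}z_4$, but the subsequent claim that $J_2{-}z_g$ contains $J_2{-}z_{g+1}$ as an induced subgraph is false. Both graphs have nine vertices, so ``contains as an induced subgraph'' would force them to be isomorphic; however, $J_2{-}z_3$ has a pendant vertex (namely $z_1$, whose only neighbour $z_3$ has been deleted), whereas every vertex of $J_2{-}z_4$ has degree at least~$2$. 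In fact $Z_2{-}v_4$ is \emph{not} an induced subgraph of $J_2{-}z_3$ at all: $Z_2{-}v_4$ consists of a triangle $\{v_1,v_2,v_3\}$ joined by the single bridge $v_3v_5$ to a $2$-path power on $\{v_5,v_6,v_7,v_8\}$, so it has an ``isolated'' triangle sharing no edge with any other triangle; in $J_2{-}z_3$ the triangles form a single edge-overlapping chain, and deleting any two vertices cannot produce that configuration.

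The paper resolves this one remaining case not by embedding but by writing down an explicit linear $3$-expression for $Z_2{-}v_4$, obtained from the vertex ordering $\langle v_8, v_7, v_6, v_5, v_3, v_2, v_1\rangle$. So your overall strategy is right, but the $t=4$ subcase needs a direct construction rather than an embedding into $J_2{-}z_g$.
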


\begin{proof}
Let $H$ be a proper induced subgraph of $Z_k$. If $k\ge 3$
then $H$ is an induced subgraph of $J_k{-}z_g$ due to
Lemma~\ref{lemma:embeddingZkintoJk}, so that
$\lcwd(H)\le \lcwd(J_k{-}z_g)$, and thus,
$\lcwd(H)\le \lcwd(J_k{-}z_g)\le k+1$ due to
Lemma~\ref{lemma:upperboundforinducedsubgraphofJk}.

We consider the remaining cases for $k\le 2$. If $k= 0$ then
$Z_k$ has two vertices, and $H$ is a graph on at most one
vertex, and $\lcwd(H)\le 1$. If $k= 1$ then $Z_k$ is an
induced path on four vertices, and therefore, $\lcwd(H)\le 2$.
We consider the case of $k= 2$. Recall that $Z_2$ is a graph
on eight vertices, and following the proof of
Lemma~\ref{lemma:embeddingZkintoJk}, it suffices to consider
$Z_2{-}v_1$, $Z_2{-}v_2$, $Z_2{-}v_3$ and $Z_2{-}v_4$. Observe
that $g= 3$, and the mapping defined in
Lemma~\ref{lemma:embeddingZkintoJk} can be used to embed
$Z_2{-}v_t$ into $J_k{-}z_g$ for $1\le t\le 3$. The final
remaining case is $Z_2{-}v_4$. A linear 3-expression for
$Z_2{-}v_4$ can be obtained from the following vertex
ordering: $\langle v_8, v_7, v_6, v_5, v_3, v_2, v_1\rangle$.
Thus, $\lcwd(Z_2{-}v_4)\le 3$, and this completes the proof.
\end{proof}

\end{section}


\begin{section}{Induced subgraphs of $S^+_k$}

Let $k$ be an integer with $k\ge 2$, and let
$n=\df (k-1)(k+1)+2$.
\begin{itemize}
\item
   The graph~$S_k$ is obtained from a $k$-path power on
   $n$ vertices with $k$-path
   layout~$\langle v_1, \ldots, v_n\rangle$ by adding the
   vertices~$w_1, w_2, w_3, w_4$ and the
   edges~$w_1w_2, w_2v_1, v_nw_3, w_3w_4$.
\item
   For $k\ge 3$, the graph~$S^+_k$ is obtained from $S_k$ by
   adding the single vertex~$w^+$ of one of the following four
   neighbourhoods:
\begin{enumerate}
\item[a)]
   $N_{S^+_k}(w^+)= \{w_1, w_2, v_1, \ldots, v_k\}$
\item[b)]
   $N_{S^+_k}(w^+)= \{v_{n-k+1}, \ldots, v_n, w_3, w_4\}$
\item[c)]
   $N_{S^+_k}(w^+)= \{w_1, w_2, v_1, \ldots, v_{k-1}\}$
\item[d)]
   $N_{S^+_k}(w^+)= \{v_{n-k+2}, \ldots, v_n, w_3, w_4\}$.
\end{enumerate}
\end{itemize}
Observe that the four cases about the neighbourhood of $w^+$
generate two pairs of isomorphic graphs: $S^+_k$ in cases~a
and b are isomorphic, and $S^+_k$ in cases~c and d are
isomorphic.

\begin{theorem}[\cite{meisterRotics2013}]
\label{theorem:Spluskupperandlowerbounds}
$\hphantom{1}$
\begin{enumerate}
\item[1)]
   $\cwd(S^+_k)\ge k+2$ for $k\ge 3$
\item[2)]
   $\cwd(S_k)\le k+1< \lcwd(S_k)$ for $k\ge 3$
\item[3)]
   $\cwd(S_2)\ge 4$.
\end{enumerate}
\end{theorem}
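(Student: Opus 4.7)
Since the theorem is cited from our earlier paper \cite{meisterRotics2013}, the full proofs live there; I sketch the approach I would take for each of the three parts.

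For part 1), $\cwd(S_k^+)\ge k+2$ for $k\ge 3$: the cleanest hope would be to embed $Z_k$ as an induced subgraph of $S_k^+$ and invoke the monotonicity of clique-width. However, a vertex count gives $|V(S_k^+)|=k^2+6$ versus $|V(Z_k)|=k^2+k+2$, and for $k\ge 5$ no such embedding can exist. I would therefore argue by contradiction on an assumed $(k+1)$-expression. Examining the subexpression that introduces the pendant vertex $w_1$ (whose neighbourhood is very small -- just $w_2$, and in cases~a,c also $w^+$) forces a specific label configuration near that tail; the mirrored tail $w_3w_4$ imposes a symmetric constraint. Combined with the fact that $w^+$ is adjacent to a long prefix of the path-power spine, this should force the core path power on $n=k^2+1$ vertices to be buildable in only $k-1$ effectively free labels, contradicting the known minimum number of labels for a $k$-path power of that length.

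For part 2), the upper bound $\cwd(S_k)\le k+1$ I would prove by writing down an explicit non-linear $(k+1)$-expression: construct the $k$-path-power core using the standard $(k+1)$-label expression for $k$-path powers, and then graft each two-vertex tail onto an endpoint using a small non-linear subexpression that temporarily reuses a label freed up once the endpoint's neighbourhood has been completed. The strict inequality $\lcwd(S_k)>k+1$ is the harder half. The plan is to suppose a linear $(k+1)$-expression exists and to scan it in order, considering a cut at which roughly half of the spine has been introduced. The one-sidedness of linearity forces the expression at this cut to simultaneously distinguish enough labelled representatives of the already-processed tail side (so that its remaining edges, though trivial, can still be added without collapsing the spine's labelling) and enough spine vertices to continue the path-power construction; a pigeonhole then shows the $k+1$ label budget is exceeded.

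For part 3), $\cwd(S_2)\ge 4$: $S_2$ has only nine vertices, so a direct case analysis on hypothetical 3-expressions is feasible. I would enumerate the possible structures of the root disjoint-union node, exploit the left-right symmetry of $S_2$ to cut the cases in half, and derive a contradiction in each surviving case by comparing induced four-vertex subgraphs against the three labels available.

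The main obstacle is clearly the strict inequality in part 2). Linear clique-width lower bounds are notoriously harder than general clique-width lower bounds because every linear ordering of the vertices must be ruled out, and in $S_k$ the two tails interact with the spine only through $v_1$ and $v_n$, so the pigeonhole step at the cut has to exploit a rather delicate interaction between tail structure, spine, and the direction of the scan.
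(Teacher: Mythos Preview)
The paper does not prove this theorem at all: it is stated with the attribution \cite{meisterRotics2013} and used as a black box, with no proof or proof sketch given in the present paper. You correctly recognise this in your opening sentence, and in that sense your proposal already matches the paper's treatment exactly.

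Everything after your first sentence is additional material that goes beyond what the paper does and cannot be compared against anything here, since the actual arguments live in the cited reference. Your sketches are plausible in outline (and your vertex-count observation ruling out a $Z_k$ embedding for large $k$ is a nice sanity check), but they remain heuristic: the pigeonhole at a ``halfway cut'' for the strict inequality $\lcwd(S_k)>k+1$ and the label-budget contradiction for $\cwd(S_k^+)\ge k+2$ are both stated at a level where the real work is hidden, and without access to \cite{meisterRotics2013} there is no way to say whether these are the approaches actually used there.
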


We show that $S_k$ is a minimal graph of linear clique-width
at least $k+2$, and we show that $S^+_k$ with the case-c
neighbourhood of $w^+$ is a minimal graph of clique-width
at least $k+2$.

\begin{lemma}
\label{lemma:Splusklcwdupperbound}
For every $k\ge 3$ and $1\le t\le \frac{(k-1)(k+1)+1}{2}+1$,
$\lcwd(S^+_k{-}v_t)\le k+1$.
\end{lemma}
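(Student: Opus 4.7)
The plan is to realize $S^+_k - v_t$ as an induced subgraph of $J_k - z_g$ and then invoke Lemma~\ref{lemma:upperboundforinducedsubgraphofJk}, obtaining $\lcwd(S^+_k - v_t) \le \lcwd(J_k - z_g) \le k+1$.

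The key idea of the embedding is to align the $v$-spine with the layout of $J_k$ so that the removed vertex $v_t$ corresponds exactly to the gap at $z_g$: set $\varphi(v_i) := z_{g-t+i}$ for $i \in \{1, \ldots, n\} \setminus \{t\}$. This is an isometry on the spine, so all $v$-to-$v$ adjacencies are preserved automatically. Then $w_2$ and $w_1$ are attached at $z_{g-t-k+1}$ and $z_{g-t-k}$ on the left, and $w_3, w_4$ symmetrically at $z_{g-t+n+k}$ and $z_{g-t+n+k+1}$ on the right; these are the natural positions, up to the layout symmetry, from which $w_2$ (resp.\ $w_3$) is within distance $k$ of $v_1$ (resp.\ $v_n$) only and $w_1$ (resp.\ $w_4$) is within distance $k$ of $w_2$ (resp.\ $w_3$) only. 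Finally, $\varphi(w^+)$ is placed according to which of the four neighbourhood cases~(a)--(d) defines $S^+_k$; each case admits a single natural slot, close to one end of the spine, from which $w^+$ sees exactly the prescribed set of vertices within distance~$k$.

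Verification then proceeds in two parts. First, all image positions must lie in $\{1, \ldots, (2k-1)(k+1)+1\} \setminus \{g\}$, which the bound on $t$ is arranged to guarantee by routine inequalities, and the images must be pairwise distinct. Second, one checks that the distance-$\le k$ relation in $J_k - z_g$ restricted to the image coincides with the edge relation of $S^+_k - v_t$: the spine-internal adjacencies come for free from the isometry; the pendant adjacencies reduce to the chosen offsets of $k$ and $k+1$; and the gap at $z_g$ does precisely the needed work, in that $v_{t-1}$ and $v_{t+1}$ remain adjacent (distance~$2$) while $v_{t-k}$ and $v_{t+1}$ correctly cease to be adjacent (distance~$k+1$).

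The main routine obstacle will be the case analysis over the four neighbourhood types of $w^+$: for each type, one must place $\varphi(w^+)$ and then verify that its image-neighbourhood inside $J_k - z_g$ equals $\varphi(N_{S^+_k}(w^+))$, using the boundary of the layout and the gap at $z_g$ to cut off unwanted potential neighbours where necessary.
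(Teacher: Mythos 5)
Your overall strategy is the same as the paper's: map the spine by $\varphi(v_i)=z_{g-t+i}$ so that $v_t$ lands on the gap $z_g$, attach $w_1,w_2$ at offsets $k$ and $k+1$ to the left of $\varphi(v_1)$, attach $w_3,w_4$ symmetrically on the right, and place $w^+$ near the appropriate end. However, your claim that the bound on $t$ guarantees all image positions lie in $\{1,\ldots,(2k-1)(k+1)+1\}$ is false, and this is exactly where a genuine case is missed. The left-hand attachments require $g-t-k\ge 1$, i.e.\ $g-t\ge k+1$. Combining the negation $g-t\le k$ with $t\le \frac{(k-1)(k+1)+1}{2}+1$ gives $k^2-1\le \frac{1}{2}(k+1)^2+\frac{1}{2}$, which forces $k=3$; and for $k=3$ (where $g=8$) the failure occurs precisely at $t=5$, which is admitted by your hypothesis since $\frac{(k-1)(k+1)+1}{2}+1=5.5$. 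In that case $w_1$ would have to go to $z_0$, and in fact $S^+_3{-}v_5$ is not embeddable into $J_3{-}z_g$ at all, so no repair of the embedding (e.g.\ by reflecting the spine) is available. The lemma is still true there, but it must be established by a separate, explicit construction: the paper exhibits a linear $4$-expression for $S^+_3{-}v_5$ directly (building the subgraph on $\{v_2,v_3,v_4,v_6,\ldots,v_{10},w_3,w_4\}$ with controlled labels and then appending $w^+,v_1,w_1,w_2$). Your proof as written omits this exceptional case entirely, so it does not cover all pairs $(k,t)$ in the statement.

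A secondary, smaller point: the right-hand attachments need $g-t+n+k+1\le m$, which you implicitly assume; this one does hold for all admissible $t$ (it reduces to $g-t+n\le m-(k+1)$, which follows from $t\ge 1$), but it should be checked rather than folded into ``routine inequalities'' together with the left-hand condition that actually fails.
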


\begin{proof}
We follow the outline of the proof of
Lemma~\ref{lemma:embeddingZkintoJk} and show that $S^+_k{-}v_t$
is an induced subgraph of $J_k{-}z_g$, except for one particular
case.

We define a mapping from $V(S^+_k)$ into $V(J_k)$, that we define
in two steps. Let $\varphi(v_i)=\df z_{g-t+i}$ for every
$1\le i\le n$.
Then, for every $1\le i\le n$,
\begin{eqnarray*}
g- t+ i
&\le& g- 1+ n\\
&=& (k-1)(k+1)- 1+ (k-1)(k+1)+ 2\\
&=& (2k-2)(k+1)+ 1\ =\ (2k-1)(k+1)+1 - (k+1)\ =\
   m- (k+1)\thinspace.
\end{eqnarray*}
We want to extend $\varphi$ and map also $w_1, w_2, w_3, w_4$
and $w^+$. Observe that no vertex of $S^+_k$ has already
been mapped to $z_1, \ldots, z_{g-t}$ and $z_{g-t+n+1}, \ldots, z_m$.
Since $J_k$ is a $k$-path power and $\varphi(v_1)= z_{g-t+1}$,
we want to map $w_1$ to $z_{(g-t+1)-(k+1)}$, and analogously,
we want to map $w_2$ to $z_{g-t-k+1}$, and we want to map
$w_3$ and $w_4$ to $z_{g-t+n+k}$ and $z_{g-t+n+k+1}$. Furthermore,
if $w^+$ has the case-a or case-c neighbourhood then we want
to map $w^+$ to $z_{g-t-1}$ or $z_{g-t}$, and if $w^+$ has the
case-b or case-d neighbourhood then we want to map $w^+$ to
$z_{g-t+n+1}$ or $z_{g-t+n+2}$. If $g-t+1\ge k+2$ then
$(g-t+1)- (k+1)= g-t-k\ge 1$, and the extension for $w_1$ and
$w_2$ and $w^+$ is possible, and if $g-t+n+k+1\le m$ then the
extension for $w_3$ and $w_4$ and $w^+$ is possible. Note that
$g-t+n+k+1\le m$ is equivalent to $g-t+n\le m- (k+1)$, and
this condition is always satisfied, as we showed above.

\smallskip

Assume that $g-t+1\le k+1$, which means $g-t\le k$. Observe
the following:
$$k^2-1\ =\ g\ \le\ t+k\ \le\ \frac{(k-1)(k+1)+1}{2}+ 1+ k\
=\ \frac{1}{2}(k+1)^2+\frac{1}{2}\thinspace,$$
which is possible only for $k\le 3$, more precisely, for
$k= 3$ according to the assumptions of the lemma. Since
$g-t+1\le k+1$ is equivalent to $8+1-t\le 4$ in this case,
$t\ge 5$, and therefore, $t= 5$ must hold. This is the only
case for which the assumption about $g-t+1\le k+1$ is possible,
and $S^+_k{-}v_t$ is not embeddable into $J_k{-}z_g$.

To complete the proof also for this remaining case, we
construct a linear 4-expression for $S^+_3{-}v_5$. We consider
the case-a and case-c neighbourhood of $w^+$, and the two
other cases follow by isomorphy. We describe a linear
4-expression for $S^+_3{-}v_5$ where $w^+$ has the case-c
neighbourhood. It is an easy exercise, applying the deep
rectangle constructions in \cite{meisterRotics2013}, to
construct a linear 4-expression for the subgraph of $S^+_3$
induced by $\{v_2, v_3, v_4, v_6,\ldots, v_{10}, w_3, w_4\}$
such that $v_2$ has label~2, $v_3$ and $v_4$ have label~3
and the other vertices have label~1. The obtained linear
4-expression can be completed by adding the remaining
vertices in this order: $w^+, v_1, w_1, w_2$.

If $w^+$ has the case-a neighbourhood then $v_2$ and $v_3$
have label~2 and $v_4$ has label~3. We can conclude
$\lcwd(S^+_3{-}v_5)\le 4$.
\end{proof}

We emphasise that Lemma~\ref{lemma:Splusklcwdupperbound} is
true for all four neighbourhood cases of $w^+$ in $S^+_k$.

\newpage

\begin{proposition}
\label{proposition:properinducedSplussubgraphs}
Let $k\ge 3$. Let $S^+_k$ be obtained with the case-c
neighbourhood of $w^+$.
\begin{enumerate}
\item[1)]
   Every proper induced subgraph of $S^+_k$ has clique-width
   at most $k+1$.
\item[2)]
   Every proper induced subgraph of $S_k$ has linear clique-width
   at most $k+1$.
\item[3)]
   Every proper induced subgraph of $S_2$ has linear
   clique-width at most 3.
\end{enumerate}
\end{proposition}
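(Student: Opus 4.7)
Since $S_k=S^+_k-w^+$ and linear clique-width is monotone under induced subgraphs, part~2) reduces to proving $\lcwd(S^+_k-v)\le k+1$ for every $v\in V(S_k)$. Part~1) additionally requires the case $v=w^+$, which is $\cwd(S^+_k-w^+)=\cwd(S_k)\le k+1$ by Theorem~\ref{theorem:Spluskupperandlowerbounds}(2). So the main task is to bound $\lcwd(S^+_k-v)$ for $v\ne w^+$.

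For $v=v_t$ with $t\le\frac{(k-1)(k+1)+1}{2}+1$, Lemma~\ref{lemma:Splusklcwdupperbound} applies directly. For larger $t$, I use the isomorphism between $S^+_k$ in cases~c and~d (reversing the path-power ordering, swapping $w_1\leftrightarrow w_4$ and $w_2\leftrightarrow w_3$, and fixing $w^+$): this identifies the case-c graph $S^+_k-v_t$ with the case-d graph $S^+_k-v_{n+1-t}$, whose removed index now lies in the first half, and Lemma~\ref{lemma:Splusklcwdupperbound} applies since it is valid for all four neighbourhood cases. For $v=w_i$ with $i\in\{1,2,3,4\}$, the same isomorphism identifies the case-c graphs $S^+_k-w_1$ and $S^+_k-w_2$ with the case-d graphs $S^+_k-w_4$ and $S^+_k-w_3$, so, after carrying out a parallel construction in case-d, it suffices to embed $S^+_k-w_3$ and $S^+_k-w_4$ in case-c as induced subgraphs of $J_k-z_g$ and then invoke Lemma~\ref{lemma:upperboundforinducedsubgraphofJk}.

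For the case-c embedding, set $c_0=k^2-k-1$ and define $\varphi(w_1)=z_{c_0}$, $\varphi(w_2)=z_{c_0+1}$, $\varphi(w^+)=z_{c_0+k-1}$, $\varphi(v_i)=z_{c_0+k+i}$ for $1\le i\le n$, together with $\varphi(w_3)=z_{c_0+n+2k}=z_m$ in the $v=w_4$ case, or $\varphi(w_4)=z_1$ in the $v=w_3$ case (where $w_4$ is isolated in $S^+_k-w_3$). A routine computation shows that all images lie in $\{1,\ldots,m\}$; that $z_g=z_{k^2-1}$ falls precisely in the one-position gap between $\varphi(w^+)=z_{k^2-2}$ and $\varphi(v_1)=z_{k^2}$ and is therefore avoided; and that the $k$-path-power adjacencies of $J_k$ restricted to the image match those of $S^+_k-v$ exactly. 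A mirrored embedding (with $w^+$ placed between $v_n$ and $w_3,w_4$) handles $S^+_k-w_3$ and $S^+_k-w_4$ in case-d, completing parts~1) and 2).

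Part~3) concerns $S_2$, which has nine vertices; by its order-reversing automorphism only $S_2-v_t$ for $t\in\{1,2,3\}$ and $S_2-w_i$ for $i\in\{1,2\}$ need treatment. For each, an explicit linear $3$-expression can be written down, in the style of the last case of the proof of Lemma~\ref{lemma:Splusklcwdupperbound}, by constructing the path-power core with three labels and then attaching the pendant chains after retiring the core to a single label. The main technical step throughout the proof is the $w_i$-case of parts~1)--2): unlike in Lemma~\ref{lemma:Splusklcwdupperbound}, where the slot of the removed vertex absorbs $z_g$, here the removed vertex lies outside the image of the path-power, so one must instead exploit the gap between $\varphi(w^+)$ and $\varphi(v_1)$ created by the case-c proper-interval structure of $S^+_k$ to route the embedding around $z_g$.
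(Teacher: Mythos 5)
Your overall plan coincides with the paper's: reduce to single-vertex deletions, use Theorem~\ref{theorem:Spluskupperandlowerbounds}, statement~2, for $w^+$, use Lemma~\ref{lemma:Splusklcwdupperbound} together with the case-c/case-d isomorphism for the vertices $v_t$, and use embeddings into $J_k{-}z_g$ plus Lemma~\ref{lemma:upperboundforinducedsubgraphofJk} for the vertices $w_i$; your explicit coordinates for $S^+_k{-}w_4$, and the parking of the isolated vertex $w_4$ at $z_1$ for $S^+_k{-}w_3$, correctly instantiate what the paper only calls straightforward.

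The gap is the deletion of $w_2$. You route the case-c graph $S^+_k{-}w_2$ through its case-d counterpart $S^+_k{-}w_3$ and claim a mirrored embedding into $J_k{-}z_g$; no such embedding exists (equivalently, the case-c graph $S^+_k{-}w_2$ is not an induced subgraph of $J_k{-}z_g$). In case-d minus $w_3$ the vertex $w_4$ is not isolated: it stays adjacent to $w^+$, so it cannot be parked far away, and the following forcing argument excludes every embedding. The vertices $v_1,\ldots,v_n$ induce a $k$-path power on $n=k^2+1\ge k+2$ vertices (a twin-free connected proper interval graph), so, up to reversal, their images occupy $n$ consecutive positions of $J_k$; since only $k^2-2$ positions lie below the missing position $z_g$ (recall $g=k^2-1$), this block lies entirely above $z_g$, so its top position is at least $2k^2$. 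But each end of the block forces an additional vertex at offset at least $k+1$ beyond it: $w_1$ must be non-adjacent to $v_1$ yet within distance $k$ of $w_2$, which is forced to sit exactly $k$ beyond $v_1$ on the far side, and $w_4$ must be non-adjacent to $v_n$ yet within distance $k$ of $w^+$, which is forced to sit exactly two beyond $v_n$. Whichever of these two gadgets lies on the upper side of the block then needs a position at least $2k^2+k+1=m+1$, which does not exist; reversing the orientation yields the same contradiction. Note that the paper disposes of the case-c deletion of $w_3$ by the isolated-$w_4$ disjoint-union observation precisely because an embedding is not needed there, and for $w_2$ it offers only the word ``analogously''; the isolated-vertex trick has no analogue for $w_2$ (in case~c, $w_1$ remains attached to $w^+$), so this one deletion genuinely requires a different argument, for instance a direct linear $(k+1)$-expression or an embedding into one of the model graphs of \cite{meisterRotics2013}, and your proof as written does not supply one. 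Your part~3) is also only sketched, but the five small graphs involved can indeed be checked directly, much as the paper does.
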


\begin{proof}
The following is straightforward to see: $S_2{-}w_4$ is an
induced subgraph of $J_2{-}z_3$, and $S^+_k{-}w_4$ is an induced
subgraph of $J_k{-}z_g$. So, $\lcwd(S_2{-}w_4)\le 3$ and
$\lcwd(S^+_k{-}w_4)\le k+1$ due to
Lemma~\ref{lemma:upperboundforinducedsubgraphofJk}.
It is also easy to see that $S^+_k{-}w_3$ is the disjoint union
of $S^+_k\setminus \{w_3, w_4\}$ and $S^+_k[\{w_4\}]$, and $S_2{-}w_3$
is the disjoint union of $S_2\setminus \{w_3, w_4\}$ and
$S_2[\{w_4\}]$. In both cases, $w_4$ is an isolated vertex. We
directly conclude with the preceding result that
$\lcwd(S^+_k{-}w_3)\le k+1$ and $\lcwd(S_2{-}w_3)\le 3$. The
upper bounds analogously apply to $S^+_k{-}w_1$ and $S^+_k{-}w_2$
and $S_2{-}w_1$ and $S_2{-}w_2$. Finally, $\cwd(S^+_k{-}w^+)\le k+1$
due to the second statement of
Theorem~\ref{theorem:Spluskupperandlowerbounds}, since
$S^+_k{-}w^+= S_k$.

We consider induced subgraphs of $S^+_k$ that are obtained from
deleting a vertex~$v_t$, and we show $\lcwd(S^+_k{-}v_t)\le k+1$.
If $1\le t\le \frac{(k-1)(k+1)+1}{2}+1$ then
$\lcwd(S^+_k{-}v_t)\le k+1$ due to
Lemma~\ref{lemma:Splusklcwdupperbound}. If
$\frac{(k-1)(k+1)+1}{2}+1< t\le n$ then $S^+_k{-}v_t$ is isomorphic
to $S^+_k{-}v_{n-t+1}$ with the case-d neighbourhood of $w^+$,
and $\lcwd(S^+_k{-}v_t)\le k+1$ due to
Lemma~\ref{lemma:Splusklcwdupperbound}. Observe here
$$n-t+1\ <\
  (k-1)(k+1)+2-\left(\frac{(k-1)(k+1)+1}{2}+1\right)+1\
  =\ \frac{(k-1)(k+1)+1}{2}+1\thinspace,$$
so that Lemma~\ref{lemma:Splusklcwdupperbound} is indeed
applicable. This completes the proof about induced subgraphs
of $S^+_k$ for $k\ge 3$.

\smallskip

It remains to consider the remaining induced subgraphs of
$S_2$. Recall from the definition of $S_2$ that $S_2$ is obtained
from an induced path on $\{w_1, w_2, v_1, v_2, v_4, v_5, w_3, w_4\}$
by adding $v_3$ and making it adjacent to $v_1, v_2, v_4, v_5$.
We consider $S_2{-}x$ for $x\in \{v_1, \ldots, v_5\}$.

Clearly, $\lcwd(S_2{-}v_3)\le 3$. We consider $S_2{-}v_2$, which
is obtained from an induced path on
$\{w_1, w_2, v_1, v_3, v_5, w_3, w_4\}$ by adding $v_4$ and making
it adjacent to $v_3$ and $v_5$. It is straightforward to verify
that $S_2{-}v_2$ has a linear 3-expression, and thus,
$\lcwd(S_2{-}v_2)\le 3$. Analogously, $\lcwd(S_2{-}v_4)\le 3$.

We consider $S_2{-}v_1$, which is the disjoint union of
$S_2[\{v_2, v_3, v_4, v_5, w_3, w_4\}]$ and $S_2[\{w_1, w_2\}]$.
Since $S_2[\{v_2, v_3, v_4, v_5, w_3, w_4\}]$ is an induced
subgraph of $S_2{-}w_1$, the first paragraph of the proof
shows $\lcwd(S_2[\{v_2, v_3, v_4, v_5, w_3, w_4\}])\le 3$, so
that $S_2{-}v_1$ has a linear 3-expressions, and thus,
$\lcwd(S_2{-}v_1)\le 3$. Analogously, $\lcwd(S_2{-}v_5)\le 3$.
\end{proof}

Assume that $S^+_k$ is obtained with the case-a neighbourhood
of $w^+$. The proof of
Proposition~\ref{proposition:properinducedSplussubgraphs},
statement~1, is analogously applicable to all proper induced
subgraphs of $S^+_k$ but $S^+_k{-}w_4$: $S^+_k{-}w_4$ is not
an induced subgraph of $J_k{-}z_g$, since $w^+$ would be mapped
to $z_g$.

\end{section}


\begin{section}{Induced subgraphs of $M_{k, 1, l}$ and $M^{\pm}_2$}

Let $k$ be an integer with $k\ge 3$, and let $n=\df (k-1)(k+1)+1= k^2$.
The graph~$F_k$ is a $k$-path power on $n$ vertices and with
$k$-path layout~$\langle v_1, \ldots, v_n\rangle$, and the
graph~$F'_k$ is a $k$-path power on $n$ vertices and with
$k$-path layout~$\langle v'_1, \ldots, v'_n\rangle$. Let $l$
be an integer with $l\ge 0$. The graph~$M_{k, 1, l}$ is obtained
from the disjoint union of $F_k$ and $F'_k$ and $l$ new
vertices~$w_1, \ldots, w_l$ such that $\{v_n, w_1, \ldots, w_l, v'_n\}$
induces a 1-path power with 1-path
layout~$\langle v_n, w_1, \ldots, w_l, v'_n\rangle$.
Informally, $F_k$ and $F'_k$ are joined by an induced path of
length~$l+1$ that connects $v_n$ and $v'_n$. The special
graphs~$M^+_2$ and $M^-_2$ are depicted in
Figure~\ref{bild:smallestforbiddengraph}. If we do not need
to distinguish between $M^+_2$ and $M^-_2$, we shortly write
$M^{\pm}_2$.

\begin{figure}[t]
\centering
\includegraphics[width=14cm]{\bilderpfad 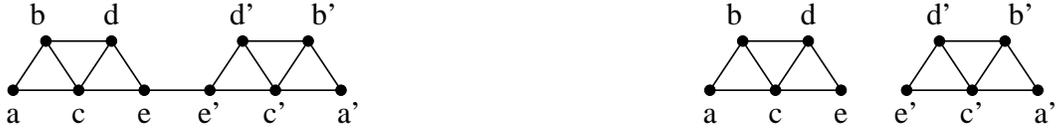}
\caption{Depicted are the graphs~$M^+_2$ to the left and $M^-_2$
to the right. Observe that $M^-_2$ is the disjoint union of two
gems.}
\label{bild:smallestforbiddengraph}
\end{figure}

\begin{theorem}[\cite{meisterRotics2013}]
\label{theorem:linearupperboundonMgraphs}
Let $k$ and $l$ be integers with $k\ge 3$ and $l\ge 0$.
\begin{enumerate}
\item[1)]
   $\lcwd(M^{\pm}_2)\ge 4$
\item[2)]
   $\lcwd(M_{k, 1, l})\ge k+2$.
\end{enumerate}
\end{theorem}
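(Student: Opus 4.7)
The plan is to apply the standard adversary argument for lower-bounding linear clique-width. For any linear $r$-expression of a graph $G$ on $N$ vertices there is an induced linear ordering $u_1,\ldots,u_N$ of $V(G)$ such that, writing $S_i\df\{u_1,\ldots,u_i\}$, the vertices of $S_i$ can be partitioned into at most $r$ classes with two vertices of the same class sharing the same neighbourhood in $V(G)\setminus S_i$. Hence, for either statement of the theorem, it is enough to exhibit, for an arbitrary ordering, some prefix $S_i$ containing $k+2$ (respectively $4$) vertices with pairwise distinct external neighbourhoods.

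For statement~2 I would fix an arbitrary ordering of $V(M_{k,1,l})$ and single out the least index $i$ for which $S_i$ meets both $F_k$ and $F'_k$. After possibly reversing the ordering and swapping the roles of $F_k$ and $F'_k$, I may assume that more than half of $F_k$ is still unprocessed at this moment. Combined with the size $n=k^2=(k-1)(k+1)+1$ of $F_k$, a pigeonhole argument guarantees a block $v_p,\ldots,v_{p+k}$ of $k+1$ consecutive unprocessed vertices in the $k$-path layout of $F_k$. The defining shift property of a $k$-path power then shows that the $k+1$ processed vertices $v_{p-k-1},\ldots,v_{p-1}$ immediately preceding this block have $k+1$ pairwise distinct external neighbourhoods within that block. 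A $(k+2)$-nd distinguishing class is supplied by any already-processed vertex of $V(F'_k)\cup\{w_1,\ldots,w_l\}$, since its external neighbours are confined to $F'_k$ and the connecting path and are therefore disjoint from $V(F_k)\setminus S_i$.

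For statement~1 the same technique is applied to the finite graphs $M^+_2$ and $M^-_2$, where the argument reduces to a short case analysis over essentially distinct prefixes. The strategy is again to locate the first index at which both halves of $M^{\pm}_2$ have been touched and to produce four witnesses with pairwise distinct external neighbourhoods using the internal shift structure inside a single half together with the weak interaction between the two halves; in the disjoint-union case $M^-_2$ the latter is forced by the complete absence of cross-edges, whereas in $M^+_2$ it is forced by the single additional edge joining the halves.

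The main obstacle is producing a block of $k+1$ consecutive unprocessed layout-vertices inside $F_k$ that is simultaneously compatible with having already processed a vertex of the opposite half: the pigeonhole count depends sensitively on $n=k^2$, and the transition from ``processed only inside $F_k$'' to ``processed in both halves'' has to be positioned so that both ingredients coexist. Ensuring afterwards that the $k+2$ exhibited witnesses really do have pairwise distinct external neighbourhoods, with the $F_k$-witnesses distinguished only through the block $\{v_p,\ldots,v_{p+k}\}$ while the opposite-half witness is distinguished from them globally, is where the technical work sits.
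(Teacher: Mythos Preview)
The paper does not prove this theorem; it is quoted with a citation to \cite{meisterRotics2013} and no argument is supplied in the present paper, so there is no in-paper proof to compare your proposal against.

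On its own merits, your sketch for statement~2 has a genuine gap at the pigeonhole step. From ``more than half of $F_k$ is unprocessed'' together with $|V(F_k)|=k^2$ you cannot conclude that some window $v_p,\ldots,v_{p+k}$ of $k{+}1$ consecutive layout vertices is entirely unprocessed: processing one vertex out of every $k{+}1$ consecutive positions, namely $v_{k+1},v_{2(k+1)},\ldots,v_{(k-1)(k+1)}$, destroys every such window while using only $k{-}1$ processed vertices, leaving $k^2-k+1>k^2/2$ vertices unprocessed. Even granting such a window, nothing in your setup forces $v_{p-k-1},\ldots,v_{p-1}$ to lie in $S_i$ (and for $p\le k{+}1$ some of them do not even exist), so the $k{+}1$ claimed witnesses inside $F_k$ are not actually produced. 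The ``reverse the ordering and swap $F_k,F'_k$'' symmetry also needs explicit justification, since the prefix-type count underlying the lower bound is tied to the insertion order and is not self-evidently invariant under reversal. As you yourself flag in the last paragraph, this coexistence of a long unprocessed block with an already-processed opposite-half vertex is exactly where the argument must do real work; in the form written it does not go through.
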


We show that $M^{\pm}_2$ is a minimal graph of linear clique-width
at least 4 and $M_{k, 1, l}$ is a minimal graph of linear
clique-width at least $k+2$.

\begin{proposition}
\label{proposition:inducedMsubgraphsandlinearcliquewidth}
For every $k\ge 3$ and $l\ge 0$,
\begin{enumerate}
\item[1)]
   every proper induced subgraph of $M^{\pm}_2$ has linear
   clique-width at most 3
\item[2)]
   every proper induced subgraph of $M_{k, 1, l}$ has linear
   clique-width at most $k+1$.
\end{enumerate}
\end{proposition}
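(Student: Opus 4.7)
The plan is to follow Propositions~3.2 and~\ref{proposition:properinducedSplussubgraphs}: by monotonicity of linear clique-width, it suffices to bound $\lcwd(M_{k,1,l} - x)$ by $k+1$ (and $\lcwd(M^{\pm}_2 - x)$ by $3$) for every single-vertex deletion $x$. Unlike in the earlier propositions a uniform embedding into $J_k - z_g$ is not available, because the connecting path $w_1,\ldots,w_l$ may be arbitrarily long; so most of the work is a direct construction of linear $(k+1)$-expressions, extending the open $k$-model machinery of Lemma~\ref{lemma:upperboundforinducedsubgraphofJk}.

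For part~2, if the deleted vertex is a $v_t \in F_k$ (the case $v'_t$ is symmetric, and the case $t=n$ is handled as in the next paragraph), then for $t<n$ the remaining graph is connected of the shape $(F_k-v_t)$-$v_n$-path-$v'_n$-$F'_k$. I build the linear $(k+1)$-expression in three stages: first process $F_k - v_t$ using the linear $(k+1)$-expression obtained from the embedding of Lemma~\ref{lemma:embeddingZkintoJk} into $J_k-z_g$, ending with $v_n$ on a singleton label and all other $F_k$-vertices collapsed to a common ``done'' label; second, process the connecting path $w_1,\ldots,w_l$ by alternating two free labels, joining each new path-vertex to its predecessor and then collapsing the predecessor to ``done''; third, process $F'_k$ in the reverse layout order $v'_n, v'_{n-1},\ldots,v'_1$, attaching $v'_n$ to the live endpoint-label of the path and continuing by the analogous linear $(k+1)$-expression for $F'_k$. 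Throughout, only $k+1$ labels are live.

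If the deleted vertex is a $w_i$ (or the case $t=n$ above), the graph splits into two components: $A$, consisting of $F_k$ with a pendant induced path of length $i-1$ attached at $v_n$, and $B$, consisting of $F'_k$ with a pendant induced path of length $l-i$ attached at $v'_n$. The key observation is that the linear $(k+1)$-expression for $F_k$ arising from the open $k$-model of Lemma~\ref{lemma:upperboundforinducedsubgraphofJk} can be arranged so that one of its $k+1$ labels, the ``done'' label, is used only as the target of relabellings and never appears in any join operation. I process $A$ with such a $(k+1)$-expression (extended through the pendant path as in the preceding paragraph), ending with all of $A$'s vertices collapsed onto the ``done'' label; then I process $B$ by an analogous $(k+1)$-expression, reusing the \emph{same} label as $B$'s ``done''. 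Because no join in either expression ever involves the ``done'' label, no spurious edges between $A$ and $B$ are created, and the whole expression uses only $k+1$ labels.

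For part~1, the graphs $M^{\pm}_2$ each have only ten vertices, and the two constructions above specialise to $k=2$ to yield linear $3$-expressions for every single-vertex deletion. The tightest sub-case is $M^-_2 - x$, where $M^-_2$ is the disjoint union of two gems (each of linear clique-width $3$); here the shared-``done''-label trick is essential, and one must verify that the gem (and the gem minus a vertex) admits a linear $3$-expression with the desired ``done''-label structure. The main obstacle I anticipate is precisely this last verification --- extracting the ``done''-label property from the open $k$-model of $J_k-z_g$ used in Lemma~\ref{lemma:upperboundforinducedsubgraphofJk}, and in particular handling by hand the very small cases where the construction is tight, as was done in the final case of the proof of Proposition~\ref{proposition:properinducedSplussubgraphs}.
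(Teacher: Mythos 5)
Your overall architecture --- deleting one vertex at a time and using a shared inactive (``done'') label to glue linear expressions of the two components when the deletion of $w_i$ or $v_n$ disconnects the graph --- is exactly the paper's strategy for those cases, and your treatment of them is sound. The gap lies in your first stage for the central case of deleting $v_t\in F_k$ with $t<n$ (and also $t=1$). You assert that the linear $(k+1)$-expression obtained by restricting the open $k$-model expression of $J_k{-}z_g$ (Lemma~\ref{lemma:upperboundforinducedsubgraphofJk}) to an embedded copy of $F_k{-}v_t$ ``ends with $v_n$ on a singleton label and all other $F_k$-vertices collapsed to a common done label''. The embedding argument only yields $\lcwd(F_k{-}v_t)\le k+1$; it gives no control whatsoever over the terminal label configuration. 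Indeed $v_n$ is mapped to $z_{g-t+n}$ with $g-t+n\le m-(k+1)$, i.e.\ to a position at least $k+1$ places before the end of the layout of $J_k$, so there is no reason that $v_n$ retains a private label at the end of the restricted expression, nor can you simply truncate the expression at the moment all of $F_k{-}v_t$ has been created and expect the active window to have collapsed onto one inactive label. The same unproved boundary property is needed again in your third stage, where you require a $(k+1)$-expression for the full $F'_k$ that inserts $v'_n$ \emph{first} with an inactive label; no cited lemma provides this (the available expressions insert $v_n$ last). This boundary behaviour is precisely the hard content of the proposition: the paper does not extract it from the embedding but constructs it directly from the deep-rectangle machinery of the main paper (its Lemmas~4.2 and~4.4), and in the opposite order --- first an expression $\delta$ for $M_{k,1,l}[\{v_n,w_1,\ldots,w_l,v'_n,\ldots,v'_1\}]$ with $v_n$ on label $k$ and all else on the inactive label~1, then an expression $\beta$ for $F_k[\{v_{p+1},\ldots,v_n\}]$ inserting $v_n$ first, with separate sub-cases $2\le p\le k$, $p=k+1$ and $k+2\le p\le n-1$, and finally $v_{p-1},\ldots,v_1$ added via the open $k$-model construction.

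Two further points are not covered by your uniform three-stage scheme. For the end vertex ($v_1$, equivalently $v'_1$) the paper abandons this construction entirely and instead embeds the whole graph $M_{k,1,l}{-}v'_1$ into a graph with a short-end $k$-model, where the parity of $l$ must be discussed and $k\ge 3$ is used. And for statement~1 the $k\ge 3$ machinery does not ``specialise to $k=2$'': the paper handles $M^{\pm}_2$ ad hoc, observing that every proper induced subgraph of the gem has a linear 3-expression with label~1 inactive (which settles $M^-_2$), that $M^+_2{-}e$ and $M^+_2{-}e'$ are proper induced subgraphs of $M^-_2$, and checking the few remaining deletions by hand. So your instinct about where the obstacle sits is correct, but as written the decisive step is assumed rather than proved.
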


\begin{proof}
We prove the first statement.

We consider $M^-_2$. Observe that
$M^-_2[\{a, b, c, d, e\}]$ has linear clique-width at most 3,
and every proper induced subgraph of $M^-_2[\{a, b, c, d, e\}]$
has a linear 3-expression with label~1 as an inactive label.
Thus, every proper induced subgraph of $M^-_2$ has linear
clique-width at most 3.

We consider $M^+_2$. Observe that $M^+_2{-}e$ and $M^+_2{-}e'$
are proper induced subgraphs of $M^-_2$, so that
$\lcwd(M^+_2{-}e)\le 3$ and $\lcwd(M^+_2{-}e')\le 3$ according
to the preceding paragraph. The other four remaining cases and
their automorphic equivalents are straightforward exercises.

\medskip

We prove the second statement. We show for every vertex~$x$
of $M_{k, 1, l}$ that $M_{k, 1, l}{-}x$ has linear clique-width
at most $k+1$. We distinguish between $x$ as a vertex from
$\{v_n, w_1, \ldots, w_l\}$ or from $\{v'_1\}$ or from
$\{v_2, \ldots, v_{n-1}\}$. The other, not considered cases
about $x$ directly follow by an automorphism argument. Let
$G=\df M_{k, 1, l}$.

We consider $x\in \{v_n, w_1, \ldots, w_l\}$. It is not
difficult to see that $F_k$ has a linear $(k+1)$-expression
with inactive label~1 that adds vertex~$v_n$ as the last vertex
with a unique label. The vertices~$w_1, \ldots, w_l$ can be
added by using two active labels only. So,
$G[\{v_1, \ldots, v_n, w_1, \ldots, w_l\}]$ has a linear
$(k+1)$-expression with inactive label~1, and so does
$G[\{w_1, \ldots, w_l, v'_n, \ldots, v'_1\}]$. As a
consequence, $G{-}x$ for $x\in \{v_n, w_1, \ldots, w_l\}$
has linear clique-width at most $k+1$.

We consider $x\in \{v'_1\}$. Then, $G{-}v'_1$ is an induced
subgraph of a graph with a short-end $k$-model, and
$\lcwd(G{-}v'_1)\le k+1$ due to the results from
\cite{meisterRotics2013}. The cases of even and odd parity
of $l$ need to be distinguished, and for the case of odd
parity, it is important to recall $k\ge 3$. Two examples,
for even and odd parity of $l$, are depicted and described
in Figure~\ref{bild:Mminusspecialembeddings}.

\begin{figure}[t]
\centering
\includegraphics[width=14cm]{\bilderpfad 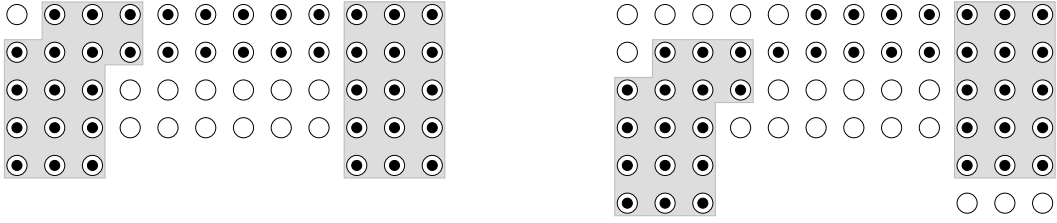}
\caption{Depicted are an embedding of a bubble model for
$M_{4, 1, 10}{-}v'_1$ into a short-end 4-model, to the left,
and an embedding of a bubble model for $M_{4, 1, 9}{-}v'_1$
into a short-end 4-model, to the right. The vertices of
$M_{4, 1, 10}{-}v'_1$ and $M_{4, 1, 9}{-}v'_1$ are highlighted
as the full vertices, and the subgraphs~$F_4$ and $F'_4{-}v'_1$
are marked by the shaded areas.}
\label{bild:Mminusspecialembeddings}
\end{figure}

We consider $x\in \{v_2, \ldots, v_{n-1}\}$. By the arguments
of the first case, we can assume a linear
$(k+1)$-expression~$\delta$ for
$G[\{v_n, w_1, \ldots, w_l, v'_n, \ldots, v'_1\}]$ such that
$v_n$ has label~$k$ and all other vertices have label~1 in
$\val(\delta)$. We want to extend $\delta$ into a linear
$(k+1)$-expression for $G{-}x$. As the main intermediate step,
we show how to construct a linear $(k+1)$-expression with
inactive label~1 for $F_k{-}x$. The ideas of the construction
resemble ideas for $J_k{-}z_g$ of Section~\ref{section:introduction}.
Let $x= v_p$. For an illustration, the four bubble models of
Figure~\ref{bild:Mconstructionexamples} show typical situations
about the deleted vertex~$x$. We distinguish between two cases
about the value of $p$ for the ease of description.
\begin{itemize}
\item
   Assume that $F_k[\{v_{p+1}, \ldots, v_n\}]$ has at most
   $(k-2)(k+1)$ vertices. This is the case for
   $k+2\le p\le n-1$.

   In this case, $F_k[\{v_{p+1}, \ldots, v_n\}]$ has a full
   bubble model that can be embedded into a deep rectangle of
   size~$k-2$. It is an exercise, by applying the construction
   ideas of Lemma~4.2 in \cite{meisterRotics2013}, to show a linear
   $(k+1)$-expression~$\beta$ with inactive label~1 for
   $F_k[\{v_{p+1}, \ldots, v_n\}]$ such that $v_n$ is inserted
   first and with label~$k$, and $v_{p+1}, \ldots, v_{p+k-1}$
   have label~$3, \ldots, k+1$ and all other vertices have
   label~1 in $\val(\beta)$.

   We remark that the linear $(k+1)$-expression of Lemma~4.2
   in \cite{meisterRotics2013} does not have an inactive label.
   Since we embed into a rectangle of size~$k-2$, we can
   nevertheless obtain a desired linear expression with inactive
   label~1.
\item
   Assume that $F_k[\{v_{p+1}, \ldots, v_n\}]$ has more than
   $(k-2)(k+1)$ vertices. This is the case for $2\le p\le k+1$.

   Assume $2\le p\le k$.
   In this case, $F_k[\{v_{p+1}, \ldots, v_n\}]$ has a full
   bubble model that can be embedded into a deep rectangle of
   size~$k-2$ and a shallow rectangle of size~1. The linear
   ${(k+1)}$-expression of Lemma~4.4 in \cite{meisterRotics2013}
   can be modified, by deleting some unnecessary vertices of
   the shallow rectangle, to obtain a linear
   $(k+1)$-expression~$\beta$ with inactive label~1 for
   $F_k[\{v_{p+1}, \ldots, v_n\}]$ that inserts $v_n$ first and
   with label~$k$, and $v_{p+1}, \ldots, v_{p+k-1}$ have
   label~$3, \ldots, k+1$ and all other vertices have label~1
   in $\val(\beta)$.

   Assume $p= k+1$.
   The construction of the preceding paragraph is not
   applicable in this case, since $v_n$ is the top vertex of
   the rightmost column, and the linear $(k+1)$-expression of
   Lemma~4.4 in \cite{meisterRotics2013} would insert $v_n$ late.
   Nevertheless, a desired expression exists and can be designed
   analogous to the expressions of the preceding case for $p\ge k+2$.
\end{itemize}
Combining the two linear $(k+1)$-expressions~$\delta$ and
$\beta$ yields a linear $(k+1)$-expression for
$G[\{v_{p+1}, \ldots, v_n, w_1, \ldots, w_l, v'_n, \ldots, v'_1\}]$.
The remaining vertices can be added according to the construction
of linear $(k+1)$-expressions for open $k$-models of
\cite{meisterRotics2013}.
\end{proof}

\begin{figure}[t]
\centering
\includegraphics[width=15cm]{\bilderpfad 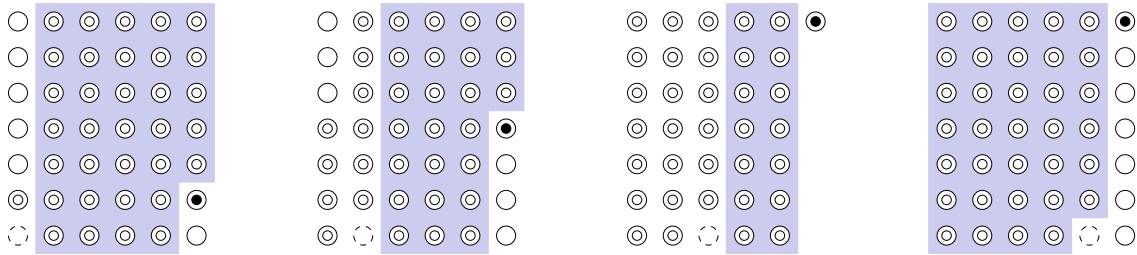}
\caption{The four bubble models show four situations about a
deleted vertex, namely $v_p$, of $F_k$, as they are considered
in the proof of
Proposition~\ref{proposition:inducedMsubgraphsandlinearcliquewidth}.
The figures are for the special case of $k= 6$. The vertices
of $F_k{-}v_p$ are marked as empty or full cycles in the
bubbles. The full cycle vertex is $v_n$. The dashed bubble
would contain $v_p$. The deleted vertex in the four figures is,
from left to right: $v_2$ and $v_{11}$ and $v_{21}$ and $v_{35}$.
The shaded area shows a deep rectangle and the neighbouring
column to the right. The full vertex is $v_n= v_{36}$, that
is included in the already constructed linear expression~$\delta$,
and the empty vertices are to be added.}
\label{bild:Mconstructionexamples}
\end{figure}

\end{section}


\end{document}